\documentclass[conference]{IEEEtran}
\IEEEoverridecommandlockouts

\usepackage{cite}
\usepackage{amsmath,amssymb,amsfonts}
\usepackage{algorithmic}
\usepackage{graphicx}
\usepackage{textcomp}
\usepackage{xcolor}
\usepackage{amsmath, amssymb, amsthm}
\theoremstyle{definition}
\usepackage{mathtools}
\newtheorem{definition}{Definition}
\newtheorem{example}{Example}
\theoremstyle{remark} 

\newtheorem{theorem}{Theorem}
\newtheorem{lemma}{Lemma}
\newtheorem{corollary}{Corollary}
\usepackage{amsmath}
\usepackage{tabularx}

\usepackage{tikz} 
\usetikzlibrary{shapes}
\usepackage{pgfplots}
\pgfplotsset{compat=1.5}
\usepackage[utf8]{inputenc} 
\usepackage[T1]{fontenc}
\usepackage{url}
\usepackage{ifthen}
\usepackage{graphicx}
\usepackage{makecell}
\usepackage{textcomp}
\usepackage{xcolor}
\usepackage{caption}
\usepackage{mathtools}
\usepackage{comment}
\usepackage{amsthm}
\usepackage{csquotes}
\usepackage{enumitem}
\usepackage{xcolor, soul}
\sethlcolor{yellow}
\usepackage{arydshln}
\usepackage{stfloats}
\usepackage{epsfig}

\def\BibTeX{{\rm B\kern-.05em{\sc i\kern-.025em b}\kern-.08em
    T\kern-.1667em\lower.7ex\hbox{E}\kern-.125emX}}
\usepackage{float} 
\begin{document}

\title{Plotkin-type Bound on Function-Correcting Codes for Monomial Mappings}
\author{
\IEEEauthorblockN{Kanchana Lokshmii Jagatti and B. Sundar Rajan}
\IEEEauthorblockA{
Department of Electrical Communication Engineering\\
Indian Institute of Science, Bengaluru, India\\ 
Email: \{kanchanal, bsrajan\}@iisc.ac.in
}
}
\maketitle

\begin{abstract}
Function-correcting codes, recently introduced by Lenz et al.\ (2023), provide a framework for the receiver to provide a higher level of protection against channel errors than the level of protection for messages.  In this paper, we derive a generalised Plotkin-type bound on the redundancy of FCCs under the Hamming metric for arbitrary functions over finite fields. The proposed bound recovers the classical Plotkin bound for classical error-correcting codes as a special case. We further obtain explicit Plotkin-type bounds for linear functions, Hamming-weight functions, and Hamming-weight threshold functions. In addition, we study FCCs for monomial mappings $f(x)=x^n$ over $\mathbb{F}_{q^k}$, which are widely utilised in coding theory, cryptography, and finite geometry, and derive an explicit Plotkin-type redundancy bound. 
\end{abstract}

\begin{IEEEkeywords}
Function-correcting codes, Monomial function, Plotkin bound.
\end{IEEEkeywords}
\section{Introduction}
\label{sec1}
In several modern communication and distributed computing systems, the decoding objective is not necessarily only the full reconstruction of the transmitted message, but also the reliable computation of a prescribed function of the message  Function-correcting codes (FCC), recently introduced by Lenz et al.\ (2023), provide a framework for receiver to provide higher level of protection against channel errors than the level of protection for messages  When the function is a bijective map the theory of FCC coincides with the theory of classical Error Corecting Codes ECC)  FCCs ensure that a function \( f(u) \) can be computed correctly even when the encoded representation of the message \( u \) is corrupted by up to \( t \) errors  Lower bounds on the redundancy of FCCs, for several functions, have been reported in \cite{1} and for linear functions in \ cite {2}. 

The study of FCCs has since evolved along multiple directions  The fundamental limits on redundancy for various classes of functions have been characterised in~\cite{3,4}, while extensions to different read channels, including symbol-pair and \( b \)-symbol channels, have been investigated in~\cite{5}--\cite{6}  More recently, structural properties such as homogeneous distance constraints and locally bounded function classes have been analysed in~\cite{7}  Additionally, refined constructions such as partition-based FCCs and FCCs with explicit data protection guarantees have been proposed in~\cite{8,9}, demonstrating that FCCs with identical function-correcting capability may exhibit significantly different performance with respect to data recovery.

A common feature in the existing literature is that the function \( f \) is either treated as an arbitrary mapping or is constrained via local properties  However, functions induced by algebraic structures, particularly monomial mappings over finite fields, have not been systematically studied within the FCC framework  Monomial mappings induce structured many-to-one mappings, thereby partitioning the input space $( \mathbb{F}_q^k )$ into equivalence classes defined by the preimages of function values  Such mappings exhibit rich combinatorial and algebraic properties that are highly relevant for coding-theoretic analysis  Recent studies on many-to-one mappings over finite fields~\cite{11}, two-to-one mappings~\cite{12}, and characterisations and constructions of (n)-to-(1) mappings~\cite{13} further highlight the significance of algebraically structured mappings in finite field theory and coding theory, motivating their investigation in the context of function-correcting codes.

In this paper, we first obtain a Plotkin-type lower bound on the redundancy of FCCs for any arbitrary function defined on the message space $\mathbb{F}_q^k$, which is computationally more efficient than the similar bounds available in the literature for arbitrary functions. Then we obtain a closed-form, simplified Plotkin-type bound for the Hamming weight function and the monomial mapping functions.  

The main contributions of this work are summarised as follows:

\begin{itemize}

\item We derive a generalised Plotkin-type redundancy bound for function-correcting codes under the Hamming metric and show that it recovers the classical Plotkin bound for error-correcting codes as a special case.

\item We derive simplified Plotkin-type bounds for linear functions, Hamming-weight functions, and Hamming-weight threshold functions.

\item We derive an explicit Plotkin-type redundancy bound for monomial mapping functions over finite fields using their preimage-set structure.

\end{itemize}

\subsection{Notations}
 The set of all positive integers is denoted by $\mathbb{N}$  The finite field with $q^k$ elements is denoted by $\mathbb{F}_{q^k}$, where $q$ is a prime power  The multiplicative group of nonzero elements of $\mathbb{F}_{q^k}$ is denoted by $\mathbb{F}_{q^k}^{\ast}$. The Hamming weight of a vector $\mathbf{x}\in\mathbb{F}_q^k$ is denoted by $w_H(\mathbf{x})$, and the Hamming distance between vectors $\mathbf{x},\mathbf{y}\in\mathbb{F}_q^k$ is denoted by $d_H(\mathbf{x},\mathbf{y})$. For a vector $\mathbf{x}$ of length $k$, $x_i$ denotes its $i$th coordinate. The notation $\mathbf{1}(\mathcal{E})$ equals $1$ if the event $\mathcal{E}$ holds and $0$ otherwise. 
\section{Preliminaries}
\label{sec2}
The system model consists of a transmitter that wants to send a message vector $x \in \mathbb{F}_q^k$  Since $\mathbb{F}_{q^k}$ can be viewed as a $k$-dimensional vector space over $\mathbb{F}_q$, each message vector in $\mathbb{F}_q^k$ may equivalently be represented as an element of $\mathbb{F}_{q^k}$. The receiver is interested in evaluating a function
$f : \mathbb{F}_{q^k} \to \mathrm{Im}(f)$,
where $q=p^s$, with $p$ prime and $s$ denoting the extension degree. The transmitter employs a systematic encoding
$\mathcal{E} : \mathbb{F}_q^k \to \mathbb{F}_q^{k+r}$
such that $\mathcal{E}(x)=(x,p)$. The encoded vector $\mathcal{E}(x)$ is transmitted over a noisy channel that may introduce up to $t$ number of $\mathbb{F}_q$-symbol errors. The receiver obtains $y=\mathcal{E}(x)+e \in \mathbb{F}_q^{k+r}$,
where $w_H(e)\leq t$, and applies a decoding function
$\mathcal{D}:\mathbb{F}_q^{k+r}\to \mathrm{Im}(f)$ such that
$\mathcal{D}(y)=f(x)$ for all $x\in\mathbb{F}_q^k$.

\subsection{Function-Correcting Codes}
\begin{definition}~\cite{1}
  A systematic encoding $\mathcal{E} : \mathbb{F}_q^k \rightarrow \mathbb{F}_q^{k+r}$ is said to be an $(f,t)$-FCC for a function $f : \mathbb{F}_q^k \rightarrow Im(f)$ if for all $x_i, x_j \in \mathbb{F}_q^k$ with $f(x_i) \neq f(x_j)$, we have
\begin{equation*}
d_H(\mathcal{E}(x_i),\mathcal{E}(x_j)) \geq 2t + 1.
\end{equation*}
\end{definition}

\noindent \textbf{Note:} For any bijective function $f$, an $(f,t)$-FCC is equivalent to a standard $(n, q^k, 2t+1)$ error-correcting code.

\begin{definition}[\cite{1}]
The optimal redundancy $r_f(k,t)$ is defined as the smallest integer $r$ such that there exists an $(f,t)$-FCC with encoding function $\mathcal{E}: \mathbb{F}_q^k \rightarrow \mathbb{F}_q^{k+r}$.
\end{definition}

\begin{definition}[\cite{1}]
Let $x_0, x_1, \dots, x_{q^k-1} \in \mathbb{F}_q^k$. The distance requirement matrix (DRM) $D = D_f(t, x_0, \dots, x_{q^k-1})$ for an $(f,t)$-FCC is a $q^k \times q^k$ matrix with entries
\[
D_{ij} =
\begin{cases}
 \max(0, 2t + 1 - d_H(x_i, x_j) ), & \text{if } f(x_i) \neq f(x_j), \\
 0, & \text{otherwise}.
\end{cases}
\]
\end{definition}

The DRM depends on the ordering of the elements of $\mathbb{F}_q^k$.

\begin{definition}\cite{1}
Let $D \in \mathbb{N}_0^{M \times M}$. A set $\mathcal{P} = \{p_i : i \in [M]\} \subset \mathbb{F}^r$ is said to be an irregular-distance code with constraint $D$ (or a $D$-code) if there exists an ordering of $\mathcal{P}$ such that
\[
d_H(p_i, p_j) \geq D_{ij}, \quad \forall i,j \in [M].
\]
Further, $N(D)$ is defined as the smallest integer $r$ such that there exists a $D$-code of length $r$ over $\mathbb{F}_q$.
\end{definition}

\noindent For the case where $D = D_f(t, x_0, \dots, x_{q^k-1})$, we take $\mathcal{P} = \{p_i : i = 0,1,\dots,q^k-1\}$, and the encoding of an $(f,t)$-FCC can be constructed as  $\mathcal{E}(x_i) = (x_i, p_i).$
\subsection{Many-to-One Monomial Mapping}

Many-to-one mappings have wide applications in several areas, particularly in cryptography, finite geometry, coding theory, and combinatorial design theory~\cite{11,12,13}. The following definitions are from~\cite{11}. 

\begin{definition}\cite{11}
Let $A$ be a finite set and  $m \in \mathbb{Z}$ with
$1 \leq m \leq |A|$. Write $|A| = km + r,$ where $k,r \in \mathbb{Z}$ and $0 \leq r < m$. Let $f: A \to B$ be a mapping from $A$ to another finite set $B$. Then $f$ is called an $m$-to-$1$ mapping on $A$ if there exist $k$ distinct elements in $B$ such that each of these elements has exactly $m$ preimages in $A$ under $f$.
The remaining $r$ elements in $A$ are called the exceptional elements of $f$ on $A$, and the set of these $r$ exceptional elements is called the exceptional set of $f$ on $A$, denoted by
$E_f(A)$.
In particular, $E_f(A)=\emptyset$ if and only if $r=0,$
that is, $m \mid |A|.$
\end{definition}

\begin{definition}\cite{11}
A polynomial $f(x) \in \mathbb{F}_q[x]$ is called \emph{$m$-to-1}, over $\mathbb{F}_q$ if the induced mapping 
$f : \mathbb{F}_q \to \mathbb{F}_q$, defined by $c \mapsto f(c)$, is $m$-to-1 on $\mathbb{F}_q$.
\end{definition}

\begin{example}
 Let $f(x) = x^3 + x$. Then $f$ maps $0,1,2,3,4$ to $0,2,0,0,3$ in $\mathbb{F}_5$, respectively. Thus, $f$ is $3$-to-$1$ on $\mathbb{F}_5$, and the exceptional set $E_f(\mathbb{F}_5) = \{1,4\}$.   
\end{example}

\begin{theorem}  \cite{11}
	\label{thm1}
The monomial $f(x)=x^n$ with $n \in \mathbb{N}$ is $gcd(n, q-1)$-to-$1$ on $\mathbb{F}_q^{*}$, and $E_f(\mathbb{F}_q^{*}) = \emptyset$.
\end{theorem}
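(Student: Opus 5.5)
The plan is to use the cyclic structure of $\mathbb{F}_q^{*}$ and reduce the question to a statement about a group homomorphism. Put $d=\gcd(n,q-1)$. The map $\phi:\mathbb{F}_q^{*}\to\mathbb{F}_q^{*}$, $\phi(c)=c^n$, satisfies $\phi(ab)=(ab)^n=a^nb^n$, so it is a homomorphism of the multiplicative group. For a homomorphism, every element of the image $\phi(\mathbb{F}_q^{*})$ has exactly $|\ker\phi|$ preimages, since the fibre over $\phi(a)$ is the coset $a\ker\phi$. So it suffices to show $|\ker\phi|=d$. Then every value in the image has exactly $d$ preimages, and $|\mathbb{F}_q^{*}|=q-1=d\cdot|\mathrm{Im}\,\phi|$.

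To compute the kernel, fix a generator $g$ of the cyclic group $\mathbb{F}_q^{*}$, which has order $q-1$. Write $c=g^j$ with $0\le j<q-1$. Then $c^n=1$ if and only if $(q-1)\mid nj$. Dividing by $d$, this holds if and only if $\frac{q-1}{d}\mid \frac{n}{d}\,j$. Because $\gcd\!\left(\frac{n}{d},\frac{q-1}{d}\right)=1$, this is equivalent to $\frac{q-1}{d}\mid j$. In the range $0\le j<q-1$ there are exactly $d$ such $j$, namely $j=\ell\frac{q-1}{d}$ for $\ell=0,\dots,d-1$. Hence $|\ker\phi|=d$.

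Finally, we match this with the paper's definition of an $m$-to-$1$ mapping. Take $A=\mathbb{F}_q^{*}$ and $m=d$, and note $1\le d\le q-1$. Write $|A|=q-1=\frac{q-1}{d}\cdot d+0$, so the remainder is $r=0$. By the previous step, the $\frac{q-1}{d}$ distinct elements of $\mathrm{Im}\,\phi$ each have exactly $d$ preimages. Therefore $f$ is $d$-to-$1$ on $\mathbb{F}_q^{*}$, with no exceptional elements. This gives $E_f(\mathbb{F}_q^{*})=\emptyset$, consistent with $d\mid |A|$.

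The only step with real content is the count of the kernel, and it rests on the cyclicity of $\mathbb{F}_q^{*}$, a standard fact we invoke. If we want to avoid an explicit generator, an alternative is available. The polynomial $x^d-1$ divides $x^{q-1}-1$, which splits into distinct linear factors over $\mathbb{F}_q$, so $x^d-1$ has exactly $d$ roots. Bézout's identity shows $\{c: c^n=1\}=\{c: c^d=1\}$. Either route gives the same count.
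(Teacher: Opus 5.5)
Your proof is correct and complete. The homomorphism and coset step, the kernel count $|\ker\phi|=\gcd(n,q-1)$ via a generator, the alternative count via $x^d-1\mid x^{q-1}-1$ and B\'ezout, and the check against the $m$-to-$1$ definition with remainder $r=0$ are all sound.

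The paper does not prove this theorem; it imports it from the cited reference. Your argument is nonetheless the one behind the paper's own Lemma~2 (each nonempty fibre of $x\mapsto x^n$ is a coset of $K=\ker f$) and Corollary~1 (fibres described through a primitive element). Your kernel computation also supplies the count $|K|=\gcd(n,q^k-1)$, which the paper only asserts in Theorem~3.
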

\section{Plotkin-type Bound for Arbitrary Function}
\label{sec3}
In this section we obtain the main result which a Plotkin-type bound for FCCs with Hamming metric for arbitrary functions. 

Let $f: \mathbb{F}_{q^k} \rightarrow \operatorname{Im}(f)$ be a mapping whose image contains $m$ distinct elements, i.e., $|\operatorname{Im}(f)|=m$. Let the distinct preimage sets of $f$ have cardinalities $\ell_1,\ell_2,\ldots,\ell_m$. Let 
 $\mathcal A_g,$  $g=1,2,\cdots,m,$ denote the $g$-th distinct preimage set of cardinality $\ell_g$. 
 \begin{definition}
 	 Let the function $f(x)$ partition the domain $\mathbb{F}_{q^k}$ into $m$ equivalence classes denoted by 
 	$\mathcal A_g$ where each equivalence class is a preimage set  We call  $S_g=\sum_{x_i,x_j\in\mathcal A_g}d_H(x_i,x_j)$ the intra-preimage distance for the $g$-th partition class and  $S=\sum_{g=1}^{m} S_g$ the total intra-preimage distance.
 \end{definition}
\begin{theorem}
	\label{thm2}
	Let $f: \mathbb{F}_{q^k} \rightarrow \operatorname{Im}(f)$ be a mapping whose image contains $m$ distinct elements, i.e., $|\operatorname{Im}(f)|=m$. Let the distinct preimage sets of $f$ have cardinalities $\ell_1,\ell_2,\ldots,\ell_m$. Let $D$ be the distance requirement matrix (DRM) corresponding to an error-correction capability $t$. Then, the redundancy $r_f(k,t)$ satisfies
	\begin{equation*}
		r_f(k,t)\ge\frac{(2t+1)\left(q^{2k}-\sum_{g=1}^{m}\ell_g^2\right)+S}  {q^{2k-1}(q-1)}-k,
	\end{equation*}
	where $S$ is the total intra-preimage distance. 
\end{theorem}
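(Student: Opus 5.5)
We follow the classical Plotkin double-counting argument, applied to the full codewords $\mathcal{E}(x_i)=(x_i,p_i)\in\mathbb{F}_q^{k+r}$ of an optimal $(f,t)$-FCC with $r=r_f(k,t)$. The plan is to bound the total pairwise distance
\[
T=\sum_{i=0}^{q^k-1}\sum_{j=0}^{q^k-1} d_H(\mathcal{E}(x_i),\mathcal{E}(x_j))
\]
from above by a column-wise count and from below using the FCC distance constraints, and then compare the two bounds.

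\textbf{Upper bound.} Write $T$ as a sum over the $k+r$ coordinates. In a fixed coordinate, let $n_a$ be the number of codewords carrying symbol $a\in\mathbb{F}_q$, so that $\sum_a n_a=q^k$. That coordinate contributes $q^{2k}-\sum_a n_a^2$ to $T$. By Cauchy--Schwarz this is at most $q^{2k}-q^{2k}/q=q^{2k-1}(q-1)$. Hence
\[
T\le (k+r)\,q^{2k-1}(q-1).
\]

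\textbf{Lower bound.} Split the ordered pairs $(i,j)$ according to whether $f(x_i)\ne f(x_j)$.
\begin{itemize}
\item If $f(x_i)\neq f(x_j)$, the definition of an $(f,t)$-FCC gives $d_H(\mathcal{E}(x_i),\mathcal{E}(x_j))\ge 2t+1$. The number of such ordered pairs is $q^{2k}-\sum_{g=1}^m \ell_g^2$, since the pairs with equal function value are exactly those lying inside a common preimage set $\mathcal{A}_g$.
\item If $x_i,x_j$ lie in the same $\mathcal{A}_g$, then systematic encoding gives $d_H(\mathcal{E}(x_i),\mathcal{E}(x_j))\ge d_H(x_i,x_j)$. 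Summing over these pairs gives at least $\sum_g S_g=S$.
\end{itemize}
Therefore
\[
T\ge (2t+1)\Bigl(q^{2k}-\sum_{g=1}^m\ell_g^2\Bigr)+S.
\]

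\textbf{Conclusion.} Combining the two bounds and solving for $r$ gives exactly the stated inequality.

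\textbf{Main obstacle.} The one point needing care is a convention for $S$: the sum defining $S_g$ must be read over ordered pairs, matching the ordered-pair count $q^{2k}-\sum_g\ell_g^2$. With that convention the bound holds as stated. If $S_g$ were instead read over unordered pairs, then $S$ would have to be replaced by $2S$; the stated bound would still hold, but it would be weaker than what the argument gives. No other step is delicate.
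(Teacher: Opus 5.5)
Your argument is correct and gives exactly the stated bound. Your ordered-pair reading of $S$ is the one the paper uses: its proof sums $S$ as $\sum_{x_j\in\mathcal A_g}\sum_{x_i\in\mathcal A_g}$, and its examples count ordered pairs.

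It is the same Plotkin double count, applied to a different object. The paper applies it only to the redundancy vectors $p_i$, viewed as a $D$-code of length $r$. It bounds $\sum_{i,j}d_H(p_i,p_j)$ below by $\sum_{i,j}D_{ij}$ and evaluates the DRM column sums via the identity $\sum_i d_H(x_i,x_j)=k(q-1)q^{k-1}$. You work with the full codewords of length $k+r$ and use the FCC definition directly, so the DRM never appears. The systematic contribution is absorbed into your column bound, which loses nothing on the first $k$ coordinates because each symbol occurs exactly $q^{k-1}$ times there. Subtracting the exact systematic sum $k\,q^{2k-1}(q-1)$ from both sides of your inequalities recovers the paper's inequality line by line.

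Your route makes one step cleaner. The paper writes $\sum_i D_{ij}=(2t+1)(q^k-\ell_g)-\sum_{f(x_i)\neq f(x_j)}d_H(x_i,x_j)$ as an equality. Since $D_{ij}=\max\bigl(0,2t+1-d_H(x_i,x_j)\bigr)$, this is only a lower bound, strict as soon as some cross-class pair has $d_H(x_i,x_j)>2t+1$. That is harmless for the direction needed. In your version the same loss appears openly, as using $2t+1$ instead of $\max\bigl(2t+1,d_H(x_i,x_j)\bigr)$ for cross-class pairs.

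Conversely, the paper's formulation states the result as a bound on the length of any $D$-code for the DRM. Either route could be sharpened by keeping those $\max$ terms.
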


\begin{proof}
	The bound is obtained by counting the sum $\sum_{i,j} d_H(p_i, p_j)$ in two different ways  Let $D$ denote the distance requirement matrix (DRM)  Let $p_1, \ldots, p_{q^k}$ be the codewords of a $D$-code of length $r$, arranged as a $q^k \times r$ matrix $P$. 
	
	Since each column of $P$ contributes at most $q^{2k}\left(1 - \frac{1}{q}\right) = q^{2k-1}(q-1)$ to the sum, we have 
	\begin{equation} 
		\sum_{i,j} d_H(p_i, p_j) \le r\, q^{2k-1}(q-1). 
		\label{eq:upper_bound} 
	\end{equation} 
	
	By definition, $d_H(p_i, p_j) \ge D_{i,j}$, which implies 
	\begin{equation}
		\sum_{i,j} d_H(p_i, p_j) \ge \sum_{i,j} D_{i,j}.
		\label{eq:sum_d}
	\end{equation}
	
	By combining \eqref{eq:sum_d} and \eqref{eq:upper_bound}, we obtain
	\begin{equation} 
		r \ge \frac{1}{q^{2k-1}(q-1)} \sum_{i,j} D_{i,j}. 
		\label{eq:r_sum_main} 
	\end{equation} 
  Let the $m$ distinct preimage sets be denoted by $\mathcal{A}_1, \ldots, \mathcal{A}_m$, where $\ell_g = |\mathcal{A}_g|$ represents the size of the $g$-th preimage set. For any element $x_j \in \mathcal{A}_g$, its corresponding preimage set is $\mathcal{A}_g = \{x_i \in \mathbb{F}_q^k : f(x_i) = f(x_j)\}$.
    By partitioning the rows and columns of the matrix according to these disjoint preimage sets, the sum of all entries of the DRM can be written as
    \begin{equation}
        \sum_{i,j} D_{i,j} = \sum_{g=1}^{m} \sum_{x_j \in \mathcal{A}_g} \sum_{i} D_{i,j}.
        \label{eq:partition}
        \end{equation}
        For a fixed element $x_j \in \mathcal{A}_g$, the corresponding column sum is given by
        \begin{align} 
        &\sum_i D_{i,j} 
        = (2t+1)(q^k-\ell_g) - \sum_{f(x_i) \neq f(x_j)} d_H(x_i,x_j) \notag \\ 
        &= (2t+1)(q^k-\ell_g) - \sum_i d_H(x_i,x_j) + \sum_{x_i \in \mathcal{A}_g} d_H(x_i,x_j). 
        \label{eq:nonzero_col_step} 
        \end{align}
        Since $\sum_i d_H(x_i,x_j)=k(q-1)q^{k-1}$ for every $x_j\in\mathbb{F}_{q}^k$, substituting this identity into \eqref{eq:nonzero_col_step} yields
        \begin{align}
            \sum_i D_{i,j} &= (2t+1)(q^k-\ell_g) - k(q-1)q^{k-1} \notag \\ 
              &\quad + \sum_{x_i \in \mathcal{A}_g} d_H(x_i,x_j).
        \label{eq:column_sum_final}
        \end{align}
        Substituting \eqref{eq:column_sum_final} into \eqref{eq:partition}, we get
        \begin{align*}
        \sum_{i,j} D_{i,j}&= \sum_{g=1}^{m} \sum_{x_j \in \mathcal{A}_g} \Bigl( (2t+1)(q^k-\ell_g) - k(q-1)q^{k-1} \Bigr) \notag \\&\quad + \sum_{g=1}^{m} \sum_{x_j \in \mathcal{A}_g} \sum_{x_i \in \mathcal{A}_g} d_H(x_i,x_j)\notag \\
        \end{align*}
        \begin{align}
            = (2t+1)\sum_{g=1}^{m} \ell_g(q^k-\ell_g) - k(q-1)q^{k-1}\sum_{g=1}^{m}\ell_g + S\label{eq:sumD}
        \end{align}
        Using the identity $\sum_{g=1}^{m}\ell_g=q^k$, we observe that 
        \begin{equation}
        \sum_{g=1}^{m}\ell_g(q^k-\ell_g) = q^k\sum_{g=1}^{m}\ell_g - \sum_{g=1}^{m}\ell_g^2 = q^{2k} - \sum_{g=1}^{m}\ell_g^2.
        \end{equation}
        Consequently, \eqref{eq:sumD} reduces to
        \begin{equation}
        \sum_{i,j}D_{i,j} = (2t+1) \left( q^{2k} - \sum_{g=1}^{m}\ell_g^2 \right) - k(q-1)q^{2k-1} + S.
        \label{total_d}
        \end{equation}
        Substituting \eqref{total_d} into \eqref{eq:r_sum_main} yields the final bound
        \begin{align}
        r_f(k,t) &\ge \frac{(2t+1)\left(q^{2k} - \sum_{g=1}^{m}\ell_g^2 \right) - k(q-1)q^{2k-1} + S}{q^{2k-1}(q-1)} \notag \\
        &= \frac{(2t+1)\left(q^{2k} - \sum_{g=1}^{m}\ell_g^2 \right) + S}{q^{2k-1}(q-1)} - k,
        \label{eqnmain}
        \end{align}
        which completes the proof.
    \end{proof}
\subsection{Recovering Plotkin Bound for classical ECCs} 

In this section, we recover the classical ECC Plotkin bound for the Hamming metric by considering the function to be a bijective map with $l_g = 1$  The bound stated in ~\eqref{eqnmain} simplifies to 
\begin{equation} 
	r \ge \frac{(2t+1)(q^{2k} - q^k)}{q^{2k-1}(q-1)} - k. 
	\label{r_plotkin}
\end{equation} 
Substituting $d = 2t+1$ and $M = q^k$ in ~\eqref{r_plotkin}, we obtain 
\begin{align*} 
	r &\ge \frac{q^k(q^k - 1)}{q^{2k-1}(q-1)} d - k 
	= \frac{q}{q-1} d \left(1 - \frac{1}{M}\right) - k. 
\end{align*} 
Rearranging the terms yields
\begin{equation*}
	\frac{1}{M} \ge 1 - \frac{q-1}{q}\cdot\frac{r+k}{d}
	= \frac{qd - (q-1)(r+k)}{qd}.
\end{equation*}
Hence,
\begin{equation*}
	M \le \frac{q d}{q d - (q-1)(r+k)}.
\end{equation*}
This corresponds to the classical Plotkin bound for error-correcting codes (ECCs) with block length $n =k+ r$ and minimum distance $d$, as stated in  \cite[Theorem 2.2.1]{10}.

\begin{example}
	We consider a bijective mapping $\ell=1$ over $\mathbb{F}_{16}$ ($k=4, q=2$) with $t=1$. Substituting these parameters  in~\eqref{r_plotkin} we get $r \ge 2$, which aligns with the parameters of the classical $(7, 4, 3)$ Hamming code. In this case, $r=3$ parity bits are required to protect $k=4$ information bits against $t=1$ error.
\end{example}

Notice that the bound in Theorem \ref{thm2} is valid for any arbitrary function on $\mathbb{F}_{q^k}.$ It depends only on the size of the pre-image sets and the intra-preimage distance of the pre-image sets. This means that the bound is the same for all the functions that create the same partition on the domain $\mathbb{F}_{q^k}.$ 
\subsection{Plotkin-type Bound for linear functions}
In this subsection, we consider linear functions on $\mathbb{F}_{q}^k$ and present a simplified Plotkin-type bound that depends on a single parameter.

If $f:\mathbb{F}_q^k \rightarrow \mathbb{F}_q^l$ is a linear function, then $\operatorname{Im}(f)=\mathbb{F}_q^l$ and hence $m=q^l$. Moreover, $\dim(\ker(f))=k-l$, so every preimage set is a coset of $\ker(f)$ and has cardinality $\ell_g=q^{k-l}$ for $g=1,\ldots,m.$ Consequently,$\sum_{g=1}^{m}\ell_g^2 = q^l(q^{k-l})^2 = q^{2k-l}$. Since every preimage set is a coset of $\ker(f)$, any pair of elements $x_i,x_j\in\mathcal A_g$ satisfies $x_i=x_j+u$ for a unique $u\in\ker(f)$. 
Hence,  $d_H(x_i,x_j) = d_H(x_j+u,x_j) = d_H(u,0) = w_H(u)$. Substituting this relation into the definition of $S$ yields 
\begin{equation*}
	S = \sum_{g=1}^{m} \sum_{x_j\in\mathcal A_g} \sum_{u\in\ker(f)} w_H(u) = \sum_{g=1}^{m} |\mathcal A_g| \sum_{u\in\ker(f)} w_H(u). 
\end{equation*}
Given that $|\mathcal A_g|=q^{k-l}$ for every $g$ and the number of preimage sets is $m=q^l$, we obtain the collapsed distance expression
\begin{equation}
	S = q^l\cdot q^{k-l} \sum_{u\in\ker(f)} w_H(u) = q^k \sum_{u\in\ker(f)} w_H(u).
	\label{s_linear}
\end{equation}

Therefore, substituting \eqref{s_linear} and $\sum_{g=1}^{m}\ell_g^2$ into Theorem~2 reduces the lower bound to
\begin{equation*}
	r_f(k,t) \ge \frac{q}{q-1}(2t+1)(1-q^{-l}) + \frac{\textstyle\sum_{u\in\ker(f)} w_H(u)}{(q-1)q^{k-1}} - k,
\end{equation*}
\begin{equation*}
	r_f(k,t) \ge \frac{q}{q-1}(2t+1)(1-q^{-l}) + \frac{s}{(q-1)q^{k-1}} - k, 
\end{equation*}
where $s$ is the sum of the Hamming weights of the vectors in $ker(f).$ This is precisely the Plotkin-type bound for linear functions presented in~\cite{2}.
\subsection{A Plotkin-type Bound for the Hamming Weight Function}

We now apply Theorem \ref{thm2} to the Hamming weight function $f(x)=w_H(x)$ for $x\in\mathbb F_q^k$. 
The image of the Hamming weight function consists of all possible weights from $0$ to $k$ is $\operatorname{Im}(f) = \{0, 1, \dots, k\}, \quad \text{implying} \quad |\operatorname{Im}(f)|  = k+1.$ For each weight $w \in \{0, 1, \dots, k\}$, the corresponding preimage set $\mathcal A_w$ contains all vectors in $\mathbb F_q^k$ with exactly $w$ nonzero coordinates. 
Thus, the cardinality of each preimage set is
$l_w = |\mathcal A_w| = \binom{k}{w}(q-1)^w.$

To compute the bound in Theorem \ref{thm2}, we calculate the sum of the squares of these preimage sizes
\begin{equation}\label{eq:sum_squares_q}
	\sum_{g=1}^{m}\ell_g^2 = \sum_{w=0}^{k} \ell_w^2 = \sum_{w=0}^{k} \binom{k}{w}^{2}(q-1)^{2w}.
\end{equation}

Substituting \eqref{eq:sum_squares_q} directly into the general redundancy bound of Theorem \ref{thm2} yields
\begin{equation}\label{eq:general_bound}
	r_f(k,t) \ge \frac{ (2t+1) \left( q^{2k} - \sum_{w=0}^{k} l_w^2 \right) + S }{q^{2k-1}(q-1)} - k,
\end{equation}

where $S$ represents the total intra-preimage distance. Now we proceed to show that $S = \sum_{w=0}^{k} |A_w|^2 \left( 2w-\frac{qw^2}{k(q-1)} \right)$. We have $S_w=\sum_{x,y\in\mathcal A_w} d_H(x,y)$, and utilizing the coordinate identity $d_H(x,y)=\sum_{i=1}^{k}\mathbf 1(x_i\neq y_i)$, we obtain $S_w=\sum_{i=1}^{k}\sum_{x,y\in\mathcal A_w}\mathbf 1(x_i\neq y_i)$.

	Let $N_{w,i} = \sum_{x,y\in\mathcal A_w} \mathbf 1(x_i\neq y_i)$ for $1\le i\le k$, which counts the number of ordered pairs in $\mathcal |A_w|^2$ that differ in the $i$-th coordinate. Since $\mathcal A_w$ is invariant under coordinate permutations, the quantities $N_{w,i}$ are the same for all $i$, allowing us to denote this common value simply as $N_w$. Therefore, the total distance sum evaluates to $S_w = \sum_{x,y\in\mathcal A_w} d_H(x,y) = \sum_{i=1}^{k} N_{w,i} = kN_w$.
	
	To determine $N_w$, let $B_a=\bigl|\{x\in\mathcal A_w:x_i=a\}\bigr|$ for $a\in\mathbb F_q$ and any fixed coordinate $i$. Then the number of ordered pairs that agree in this coordinate is $\sum_{a\in\mathbb F_q} B_a^2$. Since there are $\ell_w^2$ ordered pairs in total, it follows that $N_w=\ell_w^2-\sum_{a\in\mathbb F_q} B_a^2$.
	
	We evaluate these cardinalities by splitting $a$ into the zero and nonzero cases:
	\begin{align*}
		B_0 &= \binom{k-1}{w}(q-1)^w, \quad \text{for } a = 0, \\
		B_a &= \binom{k-1}{w-1}(q-1)^{w-1}, \quad \text{for } a \in \mathbb{F}_q^*.
	\end{align*}
	Since there are exactly $q-1$ elements where $a \in \mathbb{F}_q^*$, separating the $a=0$ term from the rest of the summation yields
	\[
	\sum_{a\in\mathbb F_q} B_a^2 = \binom{k-1}{w}^{2}(q-1)^{2w} + (q-1)\binom{k-1}{w-1}^{2}(q-1)^{2w-2}.
	\]
	Substituting this summation into the expression $N_w = \ell_w^2 - \sum_{a\in\mathbb F_q} B_a^2$ and using $S_w = k N_w$, the total distance sum for weight class $w$ expands and simplifies as follows
	\begin{align*}
		S_w
		&= k\left[ \ell_w^2 - B_0^2 - (q-1)B_a^2 \right] \\
		&= k\left[ \ell_w^2 -\frac{(k-w)^2}{k^2}\ell_w^2 -\frac{w^2}{k^2(q-1)}\ell_w^2 \right] \\
		&= \ell_w^2 \left[ k-\frac{k^2-2kw+w^2}{k} -\frac{w^2}{k(q-1)} \right] \\
		&= \ell_w^2 \left[ 2w-\frac{q w^2}{k(q-1)} \right].
	\end{align*}
	
	Summing the contributions $S_w$ across all possible weight classes from $0$ to $k$ yields the final closed-form expression for the total intra-preimage distance sum
	\begin{align}\label{eq:s}
		S = \sum_{w=0}^{k} S_w 
		= \sum_{w=0}^{k} \ell_w^2 \left( 2w-\frac{q w^2}{k(q-1)} \right).
	\end{align}
    Substituting \eqref{eq:s} into \eqref{eq:general_bound} and simplifying, we obtain
    \begin{align}
    r_f(k,t)&\ge \frac{1}{q^{2k-1}(q-1)} \Big[ (2t+1)q^{2k} \nonumber \\
     +&\quad \sum_{w=0}^{k} l_w^2 \left( 2w-(2t+1) -\frac{q w^2}{k(q-1)} \right) \Big] - k.\label{eq:weight_bound_final}
    \end{align}
    For  $q=2$, substituting $l_w=\binom{k}{w}$  into \eqref{eq:weight_bound_final}, we obtain:
    \begin{align}
         r_f(k,t)
         &\ge 
        \frac{1}{2^{2k-1}} \Big[ (2t+1)4^k \nonumber \\
         +  &\quad\sum_{w=0}^{k} \binom{k}{w}^2 \left( 2w-\frac{2w^2}{k}-(2t+1) \right) \Big] - k.
    \end{align}
\begin{example}
Consider the Hamming weight function $f(x)=w_H(x)$ for $x\in\mathbb F_2^3$. The vectors are partitioned according to their weights. The corresponding pre-image cardinalities are given by $\ell_w=|\mathcal A_w|=\binom{3}{w}$, which yields the vector $(\ell_0,\ell_1,\ell_2,\ell_3)=(1,3,3,1)$. The total Hamming distance within these pre-image sets is defined as $S=\sum_{w=0}^{3}\sum_{x_i,x_j\in\mathcal A_w}d_H(x_i,x_j).$

Using the simplified $S=\sum_{w=0}^{k}\ell_w^2 \left(2w-\frac{qw^2}{k(q-1)}\right)$ with $q=2$ and $k=3$, yields a total sum of $S=24$.

Moreover, the sum of squares of the partition cardinalities is $\sum_{w=0}^{3}\ell_w^2 = 1^2+3^2+3^2+1^2 = 20.$
Substituting these parameters into the general redundancy bound expression for $t=1$ yields $r_f(3,1)\ge \frac{15}{8}$. Since the redundancy must be an integer, this bounds the metric to $r_f(3,1)\ge 2$.
\end{example}

\subsection{A Plotkin-type Bound for Hamming Weight  Threshold Function}
We now apply Theorem \ref{thm2} to the Hamming-weight threshold function $f(x) = \triangle_T(x) \triangleq \left\lfloor \frac{w_H(x)}{T} \right\rfloor, x \in \mathbb{F}_q^k, $where $T \in \mathbb{N}$ is chosen such that $T \mid (k+1)$. Let $E = \frac{k+1}{T}. $Then the image of $f$ consists of the $E$ distinct values $\operatorname{im}(f) = \{0, 1, \ldots, E-1\}, $  and hence $m = |\operatorname{im}(f)| = E$. For each $g \in \{0, 1, \ldots, E-1\}$, the corresponding preimage set is $\mathcal{A}_g = \left\{ x \in \mathbb{F}_q^k : gT \le w_H(x) \le (g+1)T-1 \right\}. $ Thus, $\mathcal{A}_g$ contains all vectors whose Hamming weights lie in the interval $[gT, (g+1)T-1]$. Consequently, the cardinality of $\mathcal{A}_g$ is 
\begin{equation*}
	\ell_g = |\mathcal{A}_g| = \sum_{w=gT}^{(g+1)T-1} \binom{k}{w}(q-1)^w. 
\end{equation*}
To apply Theorem \ref{thm2}, we compute the sum of the squared preimage sizes
\begin{equation*} 
	\sum_{g=1}^{m}\ell_g^2 = \sum_{g=0}^{E-1} \left( \sum_{w=gT}^{(g+1)T-1} \binom{k}{w}(q-1)^w \right)^2. 
\end{equation*} 
Substituting the above equation into Theorem \ref{thm2} yields the following lower bound on the redundancy, denoted simply as $r = r_f(k,t)$
\begin{equation} 
\resizebox{\columnwidth}{!}{$
r \ge
\frac{(2t+1)\Bigl(q^{2k}
-\sum_{g=0}^{E-1}
\bigl(\sum_{w=gT}^{(g+1)T-1}
\binom{k}{w}(q-1)^w\bigr)^2\Bigr)
+S}
{q^{2k-1}(q-1)}
-k
$}
\end{equation}
 
where  $S$ is the total intra-preimage distance. 

\subsection{Plotkin-Like Bound for Min-Max Function}

We now apply Theorem\ref{thm2} to the min-max function $\mathrm{mm}_w(u)$, where $u=(u^{(1)},\ldots,u^{(w)})\in\mathbb{F}_2^k$ with $k=w\ell$ and each block $u^{(i)}\in\mathbb{F}_2^\ell$. The min-max function is defined as $\mathrm{mm}_w(u)=\left(\arg\min_{1\le i\le w}u^{(i)}, \arg\max_{1\le i\le w}u^{(i)}\right)$, where the ordering among blocks is lexicographic.

The image of the min-max function consists of all ordered pairs $(i,j)$ such that $1\le i,j\le w$ and $i\neq j$. Therefore, $\operatorname{Im}(f)=\{(i,j):1\le i,j\le w,\ i\neq j\}$, which yields $|\operatorname{Im}(f)|=w(w-1)$. For each $(i,j)\in\operatorname{Im}(f)$, let $\mathcal A_{i,j} = \{u\in\mathbb F_2^k:\mathrm{mm}_w(u)=(i,j)\}$ denote the corresponding preimage set, and let $\ell_{i,j}=|\mathcal A_{i,j}|$.

To compute the bound in Theorem~\ref{thm2}, we require the sum of the squared preimage sizes $\sum_{g=1}^{m}\ell_g^2 = \sum_{\substack{1\le i,j\le w\\ i\neq j}} \ell_{i,j}^{\,2}.$

Substituting into Theorem~\ref{thm2} yields the redundancy bound
\begin{align*}
r_f(k,t) &\ge \frac{1}{2^{2k-1}} \Bigg[ (2t+1) \Bigg( 2^{2k}
 - \sum_{\substack{1\le i,j\le w\\ i\neq j}} \ell_{i,j}^{\,2} \Bigg) + S \Bigg] - k,
\label{eq:minmax_bound}
\end{align*}
where $S$  is the total intra-preimage distance sum, is defined as  
$S = \sum_{\substack{1\le i,j\le w\\ i\neq j}} \sum_{x,y\in\mathcal A_{i,j}} d_H(x,y).$

\section{FCC For Monomial  Mapping Functions}
In this section, we study monomial mappings, derive a Plotkin-type redundancy bound, and characterise the conditions under which these mappings are bijective. We consider monomial mappings of the form $f(x)=x^n$ over $\mathbb{F}_{q^k}$ and rewrite Theorem \ref{thm1} as the following lemma.

\begin{lemma}
Let $f : \mathbb{F}_{q^k} \to \mathrm{Im}(f)$ be the monomial mapping defined by $f(x)=x^n,$ where $n \in \mathbb{N}$, and let $\ell=\gcd(n,q^k-1).$ Then every nonzero element in $\mathrm{Im}(f)$ has exactly $\ell$ preimages in $\mathbb{F}_{q^k}^{\ast}$ and $|\mathrm{Im}(f)| = \frac{q^k-1}{\ell}+1.$
\end{lemma}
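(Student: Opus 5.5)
The plan is to apply Theorem~\ref{thm1} with the field $\mathbb{F}_q$ there replaced by $\mathbb{F}_{q^k}$. To keep the argument self-contained, I would also recover the count directly from the cyclic structure of $\mathbb{F}_{q^k}^{\ast}$. The domain splits as $\{0\}\cup\mathbb{F}_{q^k}^{\ast}$. Since $f(0)=0^n=0$ for $n\in\mathbb{N}$, and $f(x)\neq 0$ for $x\neq 0$ (a field has no zero divisors), $f$ maps $\mathbb{F}_{q^k}^{\ast}$ into $\mathbb{F}_{q^k}^{\ast}$. Hence the nonzero elements of $\mathrm{Im}(f)$ are exactly the image of the restriction $f|_{\mathbb{F}_{q^k}^{\ast}}$, and all their preimages lie in $\mathbb{F}_{q^k}^{\ast}$.

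Next, I would view $f|_{\mathbb{F}_{q^k}^{\ast}}$ as the group endomorphism $x\mapsto x^n$ of the cyclic group $\mathbb{F}_{q^k}^{\ast}=\langle\alpha\rangle$ of order $N=q^k-1$. Its kernel is $\{\alpha^j : N \mid nj\}$. The condition $N\mid nj$ is equivalent to $\frac{N}{\ell}\mid j$, where $\ell=\gcd(n,N)$, so the kernel has exactly $\ell$ elements. Every fibre of a group homomorphism over a point of its image is a coset of the kernel. Therefore each nonzero $y\in\mathrm{Im}(f)$ has exactly $\ell$ preimages. This is the $\ell$-to-$1$ statement of Theorem~\ref{thm1}, with empty exceptional set, since $\ell\mid N$.

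Finally, the first isomorphism theorem (or simply counting fibres) gives
\[
\bigl|f(\mathbb{F}_{q^k}^{\ast})\bigr|=\frac{q^k-1}{\ell}.
\]
Adding the single value $0=f(0)$, which is distinct from all of these, yields $|\mathrm{Im}(f)|=\frac{q^k-1}{\ell}+1$.

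There is no real obstacle here. The only points needing care are the kernel computation $N\mid nj \iff \frac{N}{\ell}\mid j$, which follows by dividing by $\ell$ and using $\gcd\bigl(\frac{n}{\ell},\frac{N}{\ell}\bigr)=1$, and the observation that $0$ is its own unique preimage, so it does not disturb the fibre count.
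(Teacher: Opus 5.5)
Your proposal is correct and matches the paper. The paper gives no separate argument: it obtains this lemma by restating Theorem~\ref{thm1} with $\mathbb{F}_{q^k}$ in place of $\mathbb{F}_q$, which is exactly your first step. Your self-contained kernel-and-coset count (essentially the argument the paper uses in its next lemma) and your remark that $0$ is its own unique preimage are welcome additions rather than a different route; that remark is what justifies the ``$+1$'', which Theorem~\ref{thm1} on $\mathbb{F}_{q^k}^{\ast}$ alone does not supply.
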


\begin{lemma}
Let $f:\mathbb{F}_{q^k}\to \mathrm{Im}(f)$ where $f(x)=x^n$, and define the kernel as $K=\ker(f)=\{x\in \mathbb{F}_{q^k}^{\ast}:x^n=1\}$. Then $K$ is a multiplicative subgroup of $\mathbb{F}_{q^k}^{\ast}$, and each nonempty preimage set of $f$ is a multiplicative coset of $K$.
\end{lemma}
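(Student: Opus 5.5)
We show that $K$ is a subgroup and then identify each fibre of $f$ on $\mathbb{F}_{q^k}^{\ast}$ with a coset of $K$, using only the multiplicativity of $x\mapsto x^n$.

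\emph{Step 1: $K$ is a subgroup.} First, $K$ is nonempty, since $1^n=1$. For $a,b\in K$ we have $(ab^{-1})^n=a^n(b^n)^{-1}=1$, so $ab^{-1}\in K$, and the one-step subgroup test applies. Equivalently, $K$ is the kernel of the group endomorphism $\phi:\mathbb{F}_{q^k}^{\ast}\to\mathbb{F}_{q^k}^{\ast}$, $\phi(x)=x^n$. This map is a homomorphism because $\mathbb{F}_{q^k}^{\ast}$ is abelian, so $(xy)^n=x^ny^n$. Since $\mathbb{F}_{q^k}^{\ast}$ is cyclic of order $q^k-1$, $K$ is its unique subgroup of order $\gcd(n,q^k-1)=\ell$. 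This is consistent with the previous lemma and Theorem~\ref{thm1}.

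\emph{Step 2: nonzero fibres are cosets.} Let $c\in\mathrm{Im}(f)\setminus\{0\}$ and pick $x_0$ with $x_0^n=c$. Then $x_0\neq 0$, since $0^n=0$ for $n\ge1$. For any $x\in\mathbb{F}_{q^k}$ with $x^n=c$, again $x\neq0$, and
\[
(x x_0^{-1})^n = c\,c^{-1}=1,
\]
so $x\in x_0K$. Conversely, for $u\in K$ we get $(x_0u)^n=x_0^n u^n=c$. Hence $f^{-1}(c)=x_0K$, which is a multiplicative coset of $K$ with exactly $|K|=\ell$ elements. This recovers the count in the previous lemma and gives $|\mathrm{Im}(f)|-1=(q^k-1)/\ell$, the index of $K$.

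\emph{Step 3: the zero fibre.} The only remaining preimage set is $f^{-1}(0)=\{0\}$, because $x^n=0$ forces $x=0$ in a field. This set is not a coset of $K$ inside $\mathbb{F}_{q^k}^{\ast}$. So ``each nonempty preimage set'' in the statement must be read as each preimage set of a nonzero value, i.e.\ each fibre of $f$ restricted to $\mathbb{F}_{q^k}^{\ast}$, with $\{0\}$ as a separate singleton class. Stating this exception cleanly is the only delicate point. One could also say that $\{0\}=0\cdot K$ is a coset in the degenerate multiplicative sense.

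Everything else is routine group theory, and this coset description is what the subsequent redundancy bound will use: $m=(q^k-1)/\ell+1$ classes, $(q^k-1)/\ell$ of size $\ell$ and one of size $1$.
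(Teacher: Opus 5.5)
Your proof is correct and follows the paper's argument essentially step for step: the one-step subgroup test via $(ab^{-1})^n=1$, then fixing $x_0$ with $x_0^n=y$ and proving $f^{-1}(y)=x_0K$ by the two inclusions. Your explicit treatment of the zero fibre goes beyond the paper. The paper's proof chooses $x_0\in\mathbb{F}_{q^k}^{\ast}$ with $x_0^n=y$ for every $y\in\mathrm{Im}(f)$, which silently excludes $y=0$; there $f^{-1}(0)=\{0\}$ is a coset of $K$ only in the degenerate sense $\{0\}=0\cdot K$, exactly as you note.
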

\begin{proof}
Let
$K=\ker(f)=\{x\in\mathbb{F}_{q^k}^{\ast}:x^n=1\}$.
For any $a,b\in K$, we have $a^n=b^n=1$. Hence, $(ab^{-1})^n=a^n(b^{-1})^n=1,$
which implies $ab^{-1}\in K$ and hence $K$ is a subgroup of $\mathbb{F}_{q^k}^{\ast}$.

Now let $y\in \operatorname{Im}(f)$ and choose $x_0\in\mathbb{F}_{q^k}^{\ast}$ such that
$x_0^n=y$.
If $x\in f^{-1}(y)$, then
$x^n=x_0^n,$
and therefore
$(xx_0^{-1})^n=1.$
Thus,
$xx_0^{-1}\in K$,
which implies that there exists $k\in K$ such that
$x=x_0k$.
Conversely, for any $k\in K$,
$(x_0k)^n=x_0^nk^n=y.$
Hence,
$f^{-1}(y)=x_0K$.
Therefore, each nonempty preimage set of $f$ is a multiplicative coset of $K$.
\end{proof}
\begin{corollary}
Let $f \colon \mathbb{F}_{q^k} \to \operatorname{Im}(f)$ be defined by $f(x)=x^n$, and let $\ell=\gcd(n,q^k-1)$ and $M=\frac{q^k-1}{\ell}$. If $\beta$ is a primitive element of $\mathbb{F}_{q^k}$, then the preimage sets of $f$ on $\mathbb{F}_{q^k}^{\ast}$ are given by $\mathcal{A}_{\beta^g} = \{\beta^{g+pM} : 0 \le p \le \ell-1\},  0 \le g \le M-1.$ Equivalently, $\mathcal{A}_{\beta^g} = \left\{ \beta^g, \beta^{g+M}, \beta^{g+2M}, \dots, \beta^{g+(\ell-1)M} \right\}.$
Moreover, $\mathcal{A}_0=\{0\}.$ 
\end{corollary}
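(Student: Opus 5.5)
The plan is to combine the second lemma above, which identifies every nonempty preimage set on $\mathbb{F}_{q^k}^{\ast}$ as a multiplicative coset $x_0K$, with the cyclic structure of $\mathbb{F}_{q^k}^{\ast}$. First we determine $K$ explicitly. Since $\beta$ is primitive, $\mathbb{F}_{q^k}^{\ast}=\langle\beta\rangle$ is cyclic of order $q^k-1$. The condition $(\beta^j)^n=1$ is equivalent to $(q^k-1)\mid jn$. Dividing by $\ell=\gcd(n,q^k-1)$, this becomes $M\mid j(n/\ell)$, and since $\gcd(M,n/\ell)=1$ it reduces to $M\mid j$. Hence
\[
K=\{\beta^{pM}:0\le p\le \ell-1\},
\]
which is the unique subgroup of order $\ell$. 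This is consistent with the first lemma, which says each nonzero image value has exactly $\ell$ preimages.

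Next we write down the cosets. For $0\le g\le M-1$, the coset $\beta^gK$ equals $\{\beta^{g+pM}:0\le p\le \ell-1\}$, which is exactly the displayed set $\mathcal{A}_{\beta^g}$. These $M$ cosets are pairwise distinct: the exponents $g+pM$ for $0\le g\le M-1$ and $0\le p\le \ell-1$ run over all residues $0,\dots,q^k-2$ exactly once, since $M\ell=q^k-1$. So the cosets partition $\mathbb{F}_{q^k}^{\ast}$. The image value of the coset $\beta^gK$ is $f(\beta^g)=\beta^{gn}$. Strictly, the label $\mathcal{A}_{\beta^g}$ names the class by its representative $\beta^g$ rather than by its image value $\beta^{gn}$. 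We will note this indexing convention explicitly.

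Finally, the zero element is handled separately. For $x\neq 0$ we have $x^n\neq 0$, while $0^n=0$ because $n\ge 1$. Hence $f^{-1}(0)=\{0\}$, that is, $\mathcal{A}_0=\{0\}$.

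The only step needing care is the gcd reduction showing $K=\langle\beta^M\rangle$. This is standard, and the rest is bookkeeping. We therefore expect no real obstacle beyond stating the indexing convention clearly.
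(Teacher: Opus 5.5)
Your proposal is correct and follows the paper's route. The paper states this corollary without a separate proof, as a direct consequence of Lemma~2 (nonzero preimage sets are multiplicative cosets $x_0K$) applied to the cyclic group $\mathbb{F}_{q^k}^{\ast}=\langle\beta\rangle$, in which $K=\langle\beta^{M}\rangle$. You supply the details the paper leaves implicit: the gcd reduction giving $K=\{\beta^{pM}:0\le p\le \ell-1\}$, the check that the exponents $g+pM$ run over $0,\dots,q^k-2$ exactly once, the separate treatment of $\mathcal{A}_0=\{0\}$, and the correct remark that $\mathcal{A}_{\beta^g}$ is labelled by the coset representative rather than by the image value $\beta^{gn}$.
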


\begin{example}
\label{exm2}
Consider the monomial mapping $f(x)=x^n$ over $\mathbb{F}_{4^2}$, where $|\mathbb{F}_{4^2}^{*}|=15$ and $\ell=\gcd(n,15)$. Let $\beta$ be a primitive element of $\mathbb{F}_{4^2}$, and represent each field element as $(a,b)\in\mathbb{F}_4^2$.

\textbf{Case 1:} For $n=3$, $\ell=3$, so $f(x)=x^3$ is $3$-to-$1$. The $15$ nonzero elements form $5$ preimage sets of size $3$, yielding $|\mathrm{Im}(f)|=1+\frac{15}{3}=6$ with $\mathrm{Im}(f)=\{0,1,\beta^3,\beta^6,\beta^9,\beta^{12}\}$. The preimage sets are
\begin{align*}
\mathcal{A}_{0}
&= \{(0,0)\} \mapsto 0,\\
\mathcal{A}_{1}
&= \{(1,0), (\alpha,0), (\alpha^2,0)\} 
\leftrightarrow \{1, \beta^5, \beta^{10}\} \mapsto 1, \\[3pt]
\mathcal{A}_{\beta}
&= \{(0,1), (0,\alpha), (0,\alpha^2)\} \leftrightarrow \{\beta, \beta^6, \beta^{11}\} \mapsto \beta^3, \\[3pt]
\mathcal{A}_{\beta^2}
&= \{(\alpha,1), (\alpha^2,\alpha), (1,\alpha^2)\} \leftrightarrow \{\beta^2, \beta^7, \beta^{12}\} \mapsto \beta^6, \\[3pt]
\mathcal{A}_{\beta^3}
&= \{(\alpha,\alpha^2), (\alpha^2,1), (1,\alpha)\}  \leftrightarrow \{\beta^3, \beta^8, \beta^{13}\} \mapsto \beta^9, \\[3pt]
\mathcal{A}_{\beta^4}
&= \{(1,1), (\alpha,\alpha), (\alpha^2,\alpha^2)\}  \leftrightarrow \{\beta^4, \beta^9, \beta^{14}\} \mapsto \beta^{12}.
\end{align*}

\textbf{Case 2:} For $n=5$, $\ell=5$, so $f(x)=x^5$ is $5$-to-$1$. The $15$ nonzero elements form $3$ preimage sets of size $5$, yielding $|\mathrm{Im}(f)|=1+\frac{15}{5}=4$ with $\mathrm{Im}(f)=\{0,1,\beta^5,\beta^{10}\}$. The  preimage sets are
\begin{align*}
\mathcal{A}_{0}
&= \{(0,0)\} \mapsto 0,\\
\mathcal{A}_{1}
&= \{(1,0), (\alpha,\alpha^2), (0,\alpha), (\alpha,\alpha), (1,\alpha^2)\} \\
&\quad \leftrightarrow \{1, \beta^3, \beta^6, \beta^9, \beta^{12}\} \mapsto 1, \\[4pt]
\mathcal{A}_{\beta}
&= \{(0,1), (1,1), (\alpha^2,\alpha), (\alpha^2,0), (1,\alpha)\} \\
&\quad \leftrightarrow \{\beta, \beta^4, \beta^7, \beta^{10}, \beta^{13}\} \mapsto \beta^5, \\[4pt]
\mathcal{A}_{\beta^{2}}
&= \{(\alpha,1), (\alpha,0), (\alpha^2,1), (0,\alpha^2), (\alpha^2,\alpha^2)\} \\
&\quad \leftrightarrow \{\beta^2, \beta^5, \beta^8, \beta^{11}, \beta^{14}\} \mapsto \beta^{10}.
\end{align*}
\end{example}


%

\begin{example}
\label{exm3}
Consider the monomial mapping $f(x)=x^n$ over $\mathbb{F}_{3^2}$, where $|\mathbb{F}_{3^2}^{*}|=8$ and $\ell=\gcd(n,8)$. Let $\beta$ be a primitive element of $\mathbb{F}_{3^2}$, represent each field element as $(a,b)\in\mathbb{F}_3^2$.
Both cases yield the same image set and preimage partition; only the image assignments to the preimage sets differ.

\textbf{Case 1:} For $f(x)=x^2$, $\ell=\gcd(2,8)=2$. The preimage sets are
\begin{align*} \mathcal{A}_{0} &= \{(0,0)\} \mapsto 0,\\ \mathcal{A}_{1} &= \{(1,0), (2,0)\} \leftrightarrow \{1,\beta^4\} \mapsto 1, \\ \mathcal{A}_{\beta} &= \{(0,1), (0,2)\} \leftrightarrow \{\beta,\beta^5\} \mapsto \beta^{2}, \\ \mathcal{A}_{\beta^{2}} &= \{(1,2), (2,1)\} \leftrightarrow \{\beta^2,\beta^6\} \mapsto \beta^{4}, \\ \mathcal{A}_{\beta^{3}} &= \{(2,2), (1,1)\} \leftrightarrow \{\beta^3,\beta^7\} \mapsto \beta^{6}. \end{align*}

\textbf{Case 2:} For $f(x)=x^6$, $\ell=\gcd(6,8)=2$. The preimage sets are
\begin{align*} \mathcal{A}_{0} &= \{(0,0)\} \mapsto 0,\\ \mathcal{A}_{1} &= \{(1,0), (2,0)\} \leftrightarrow \{1,\beta^4\} \mapsto 1, \\ \mathcal{A}_{\beta} &= \{(2,2), (1,1)\} \leftrightarrow \{\beta^3,\beta^7\} \mapsto \beta^{2}, \\ \mathcal{A}_{\beta^{2}} &= \{(1,2), (2,1)\} \leftrightarrow \{\beta^2,\beta^6\} \mapsto \beta^{4}, \\ \mathcal{A}_{\beta^{3}} &= \{(0,1), (0,2)\} \leftrightarrow \{\beta,\beta^5\} \mapsto \beta^{6}. \end{align*}

\end{example}

\subsection{Plotkin-type Bound for monomial mapping function}
In this subsection, we derive the Plotkin-type bound for function-correcting codes for monomial mappings.

\begin{theorem}
        Let $f(x)=x^n$ be a monomial function over $\mathbb{F}_{q^k}$, and let $\ell=\gcd(n,q^k-1)$. The zero preimage set is the singleton $\mathcal A_0=\{0\}$, and every nonzero preimage set has cardinality $\ell$. Furthermore, if $\beta$ is a primitive element of $\mathbb F_{q^k}$, the distinct nonzero preimage sets are defined in Corollary~1. Let $M=\frac{q^k-1}{\ell}$ be the number of nonzero preimage sets. Let the kernel of the mapping be $K = \ker(f) = \{\omega \in \mathbb{F}_{q^k}^{\ast}: \omega^n = 1\}.$ We have  $\ell = |K| = \gcd(n, q^k-1).$ The redundancy satisfies the lower bound
        \begin{equation*}
        r_f(k,t) \ge \frac{(2t+1)(q^k-1)(q^k-\ell+1)+S}{q^{2k-1}(q-1)} - k,
        \end{equation*}
        where $S$ is the total intra-preimage Hamming distance given by
          $S=(\ell-1)k(q-1)q^{k-1}$.

\end{theorem}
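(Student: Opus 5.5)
The plan is to apply Theorem~\ref{thm2} directly, using the preimage structure from Lemma~1, Lemma~2 and Corollary~1. Everything then reduces to computing the two quantities $\sum_g \ell_g^2$ and $S$ in closed form.

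\textbf{Step 1: the squared preimage sizes.} The partition of $\mathbb{F}_{q^k}$ consists of the singleton $\mathcal A_0=\{0\}$ together with $M=(q^k-1)/\ell$ cosets of $K$, each of size $\ell$. Hence
\[
\sum_g \ell_g^2 = 1+M\ell^2 = 1+(q^k-1)\ell,
\]
and therefore
\[
q^{2k}-\sum_g\ell_g^2 = (q^k-1)(q^k+1)-(q^k-1)\ell = (q^k-1)(q^k-\ell+1).
\]
This produces the first term in the numerator of the stated bound.

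\textbf{Step 2: the intra-preimage distance $S$.} The singleton $\mathcal A_0$ contributes nothing. By Lemma~2, the ordered pairs lying in a common nonzero preimage set are exactly the pairs $(y,y\omega)$ with $y\in\mathbb{F}_{q^k}^{\ast}$ and $\omega\in K$. This parametrisation is bijective: given $y$, the second element $y\omega$ determines $\omega$ uniquely. Therefore
\[
S=\sum_{\omega\in K}\sum_{y\in\mathbb{F}_{q^k}^{\ast}} d_H(y,y\omega).
\]
The key observation is that addition in $\mathbb{F}_{q^k}$ coincides with vector addition in $\mathbb{F}_q^k$ under the fixed $\mathbb{F}_q$-linear identification. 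Consequently
\[
d_H(y,y\omega)=w_H\bigl(y(\omega-1)\bigr).
\]
The term $\omega=1$ contributes $0$. For each of the $\ell-1$ elements $\omega\in K\setminus\{1\}$, the factor $\omega-1$ is nonzero. Hence the map $y\mapsto y(\omega-1)$ is a bijection of $\mathbb{F}_{q^k}^{\ast}$, and the inner sum equals the total weight of all vectors of $\mathbb{F}_q^k$, namely $k(q-1)q^{k-1}$ (the same identity used in the proof of Theorem~\ref{thm2}). This gives
\[
S=(\ell-1)k(q-1)q^{k-1}.
\]

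\textbf{Step 3: substitution.} Inserting the results of Steps 1 and 2 into Theorem~\ref{thm2} yields the stated bound.

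The main obstacle is Step 2. One must justify that the multiplicative coset structure interacts with the Hamming metric only through the additive difference $y\omega-y$. One must also check that the parametrisation $(y,\omega)$ counts each ordered intra-coset pair exactly once. Both points follow from the $\mathbb{F}_q$-linearity of the vector representation and from $\omega-1\neq 0$ for $\omega\neq 1$. The remaining steps are routine algebra.
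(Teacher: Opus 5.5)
Your proposal is correct and follows essentially the same route as the paper. You use the same identity $\sum_g\ell_g^2=1+(q^k-1)\ell$, and you evaluate $S$ the same way: intra-coset pairs are written as $(y,y\omega)$ with $\omega\in K$, giving terms $w_H\bigl(y(\omega-1)\bigr)$, and multiplication by $\omega-1\neq 0$ permutes $\mathbb{F}_{q^k}^{\ast}$. Your explicit check that this parametrisation counts each ordered pair exactly once is a detail the paper leaves implicit.
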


\begin{proof}
               Since $\mathcal A_0=\{0\}$ has cardinality $1$, and each nonzero preimage set $\mathcal A_{\beta^g}$ for $0\le g\le M-1$ has cardinality $\ell$, we obtain $\sum_{g=1}^{m}\ell_g^2 = 1+\sum_{g=0}^{M-1}\ell^2 = 1+M\ell^2$. Substituting $\sum_{g=1}^{m}\ell_g^2=\ell(q^k-1)+1$ into the general bound of Theorem \ref{thm2} yields
         \begin{align}
                r_f(k,t)&\ge\frac{(2t+1)\bigl(q^{2k}-\ell(q^k-1)-1\bigr)+S}{q^{2k-1}(q-1)}-k\notag\\
                &=\frac{(2t+1)(q^k-1)(q^k-\ell+1)+S}{q^{2k-1}(q-1)}-k.
         \label{r_mono}
         \end{align}


Now we proceed to compute an expression for $S$. Since $\mathcal{A}_0 = \{0\}$ contains only one element, its intra-set distance is zero, allowing us to restrict the total distance sum exclusively to the nonzero classes from $g=0$ to $M-1$. Thus, $S=\sum_{g=0}^{M-1}\sum_{x,y\in A_{\beta^g}} d_H(x,y)$. Using the identity $d_H(u,v)=w_H(u-v)$ and writing each element of $A_{\beta^g}$ as $\beta^g\omega$, we obtain $S=\sum_{\omega\in K}\sum_{a=0}^{q^k-2}w_H\bigl(\beta^a(\omega-1)\bigr)$. 

Since the term corresponding to $\omega=1$ contributes zero, it simplifies to $S=\sum_{\substack{\omega\in K\\ \omega\neq1}}\sum_{a=0}^{q^k-2} w_H\bigl(\beta^a(\omega-1)\bigr)$. For any fixed $\omega\neq1$, multiplication by $\omega-1$ is a bijection on $\mathbb F_{q^k}^{*}$, and hence $\sum_{a=0}^{q^k-2}w_H\bigl(\beta^a(\omega-1)\bigr)=\sum_{z\in\mathbb F_{q^k}^{*}} w_H(z)$. Because $|K|=\ell$ and the total weight of the non-zero field elements evaluates to $\sum_{z\in\mathbb F_{q^k}^{*}} w_H(z)=k(q-1)q^{k-1}$, it follows directly that $S=(\ell-1)k(q-1)q^{k-1}$.

This completes the proof.
\end{proof}

\begin{example}

For the monomial mapping $f(x)=x^5$ over $\mathbb{F}_{2^4}$, we have
$\ell=\gcd(5,16-1)=\gcd(5,15)=5=n$. Hence, the $q^k-1=15$ nonzero elements are partitioned into $M=\frac{q^k-1}{\ell}=3$ preimage sets, each of size $\ell=5$. The term $S = \sum_{g=0}^{M-1} \sum_{x_i,x_j\in\mathcal A_{\beta^g}} d_H(x_i,x_j)$ represents the total Hamming distance within these preimage sets. One of the preimage sets is given by
$\mathcal{A}_{1}=\{1000, 0001, 0011, 0101, 1111\},$. The total Hamming distance obtained by summing, for each vector, its distance to all other vectors in the set is $44$. Repeating this computation for all three disjoint preimage sets yields $S=128$. Substituting this value into the redundancy bound in \eqref{r_mono} for t=1, we obtain $r\ge2$.

\end{example}


\section{Conclusion}
In this paper, we derived a generalised Plotkin-type redundancy bound under the Hamming metric for arbitrary functions. The proposed bound captures the influence of the underlying function through the structure of its preimage sets and recovers the classical Plotkin bound for error-correcting codes as a special case. We further obtained explicit bounds for monomial mapping functions of the form ($f(x)=x^n$) by characterising their preimage structure. We derived specialised bounds for linear, Hamming-weight, and Hamming-weight threshold functions. Several examples were presented to illustrate the applicability of the proposed bounds and the impact of function structure on the redundancy of function-correcting codes.

\end{document}